\newtheorem{theorem}{Theorem}
\newtheorem*{remark}{Remark}
\def\P{\mathbb{P}}
\def\F{\mathbb{F}}
\def\E{\mathbb{E}}
\def\N{\mathbb{N}}
\tikzset{
->, % makes the edges directed
>=stealth', % makes the arrow heads bold
node distance=3cm, % specifies the minimum distance between two nodes. Change if necessary.
every state/.style={thick, fill=gray!10}, % sets the properties for each ’state’ node
initial text=$ $, % sets the text that appears on the start arrow
}
\title{
Estimating Transition Rates in Two-State Non-Homogeneous Markov Jump Processes with Intermittent Observations:
\\

A Pseudo-Marginal McMC Approach via Honest Times % Data Augmentation
}
\author{Dario Gasbarra%\footnote{Corresponding author: dario.gasbarra@uwasa.fi}
\\ Department of Mathematics and Statistics, \\ University of Vaasa, Finland, 
\\  \\ Sangita Kulathinal,  Etienne Sebag
\footnote{Corresponding author: etienne.sebag@helsinki.fi}
\\ Department of Mathematics and Statistics, \\ University of Helsinki,  Finland}
\begin{document}

\maketitle

\begin{abstract}
A possibly time-dependent transition intensity matrix or  generator $(Q(t))$  characterizes the law of a Markov jump process (MP). For a time homogeneous MP, the transition probability matrix (TPM) can be expressed as a matrix exponential of $Q$. However, when dealing with a time non-homogeneous MP, there is often no simple analytical form of the TPM in terms of $Q(t)$, unless they all commute. This poses a challenge because when a continuous MP is observed intermittently, a TPM is required to build a likelihood. 
In this paper, we show that the estimation of the transition intensities of a two-state non-homogeneous Markov model can be carried out by augmenting the intermittent observations with honest random times associated with two independent driving Poisson point processes, and that
sampling %of 
the full path is not required. We propose a pseudo-marginal McMC algorithm to estimate the transition rates using the augmented data. Finally,  we illustrate our approach by simulating a continuous MP and by using observed (intermittent) time grids extracted from real clinical visits data. 
\end{abstract}

\noindent \textbf{Keywords.} Non-homogeneous Markov processes, Data augmentation, 
Honest random time, Intermittent observations, Transition intensity and probability, Poisson point processes, Pseudo-marginal McMC

%%%%%%%%%%%%%%%%%%%%%%%%%%%%%%%%%%%%%%%%%%%%%%
\section{Introduction}
  Multistate models are commonly employed in many medical applications to study diseases within a cohort, whereby different states describe specific statuses, or markers, of a disease. In this framework, researchers are interested in the state occupancy times (how long individuals remain in each state) and the exact transition times (when individuals move between states). This information is essential to better understand and track disease progression or remission, as well as to identify how individuals respond to a certain treatment, for example. Nonetheless, continuous-time state observations are rarely available in practice because the monitored subjects having the disease are only observed at fixed points in time corresponding to clinical visits. The set-up naturally lends itself to that of a non-homogeneous Markov process, characterized by movements between states that depend on time. The process, $\{X(t), t \ge 0\}$, is observed at intermittent, discrete time-points for each individual. Therefore, clinicians/practitioners only have access to snapshots in time concerning the underlying process instead of having the full trajectory.  \\
 
The underlying dynamics of the process are often captured in terms of the transition intensities, and the transition probabilities are then derived from the intensities by solving Kolmogorov forward equations. From a statistical point of view, when considering such a process, the interest is in the estimation of these parameters, as well as in the prediction of the latent path sequence at unobserved times. When we know the exact state transition times, then the likelihood is written in terms of the transition intensities, and is tractable and easy to work with. Yet in the more common cases that the occupied states are only known at specific and intermittent time points, the likelihood must be constructed by considering the transition probabilities over intervals. Since the states are only known at specific times, e.g. clinical visits, say $t_{1}$ and $t_{2}$, and what happened between $t_{1}$ and $t_{2}$ is unknown, one must rely exclusively on the transition probabilities $P(X(t_2) = j \mid X(t_1) = i)$
to formulate the likelihood of transitioning between states over the unobserved time interval. 
The overarching problem of using these transition probabilities in complex multistate models is that, often, the Kolmogorov forward equations do not have nice closed-form analytical solutions, which can consequently impose a substantial numerical and computational burden under practical implementations. Various workarounds exist in the statistical literature; approximating the transition intensities by piecewise constant intensities and then applying the matrix-exponential, using a smooth function of time, or using a time transform on which the process becomes homogeneous \cite{Titman2011, Hubbard2007, Kendall2024}. Some of the methods are  made available in the \texttt{nhm}  \texttt{R}-package (\cite{nhmTitman2019}). In general, these approaches have limited applicability. \\

%Methods for estimation of parameters under non-homogeneous Markov models require numerical solutions to Kolmogorov forward equations  and \cite{Titman2011} has discussed several situations where the solutions can be provided. The methods are  made available in the \texttt{nhm} package in \texttt{R} (\cite{nhmTitman2019}).  Other modelling approaches such as piecewise constant intensities, a time transformation on which the process is homogeneous exist \cite{Hubbard2007, Kendall2024}, but with limited applicability. 

In this paper, we propose a pseudo-marginal Markov chain Monte Carlo (McMC) approach which consists of a novel data augmentation procedure using Poisson point processes (PPP) to eliminate the need altogether to handle these computations. In order to illustrate this methodology, we will focus on a simple two-state reversible model to capture the dynamics of disease in which subjects transition back-and-forth between two distinct health states. This is illustrated in Figure~\ref{2_state} and examples can be found from \cite{Mehtala2011, Cook2018}. It is of statistical interest to estimate the transition intensities $\lambda_{0}(t)$ and $\lambda_{1}(t)$ for this model. Two-state disease models often deal with binary clinical hard outcomes which are directly measurable at each clinical visit. For example, a hard clinical outcome at each visit might be the presence (state 1) or absence (state 0) of a disease marker which subsequently can be used to allow for broader categorizations such as ``infected'' vs ``not-infected'', or ``disease onset'' vs ``no disease''. This disease marker may fluctuate at any point in time. Moreover, we assume that each subject's disease status is fully captured by one of these two states at any given point in time. Treatment for diseases may, or may not, be administered at each clinical visit that aim to influence these transitions. For instance, if a subject stays in the disease state 1 for an extended period of time, then we can speculate that this subject is exhibiting a poor response to the treatment.
\\

Nowadays, there exists several statistical packages in \texttt{R} to assist with statistical inference of multistate models, namely the \texttt{msm} package \cite{Jackson2011}, \texttt{nhm}, and \texttt{mstate} packages. The \texttt{nhm} package, which builds on \texttt{msm}, is tailored to estimating transition rates from panel data. Nevertheless, to our knowledge, the computations fail under \texttt{nhm} when the observed time grids are irregular; that is, when there is variation in the time grids across subjects. Hence, our proposed methodology aims to address this gap by providing a versatile framework that handles estimation in non-homogeneous cases for two-state models. The rest of the paper is organized as follows: in Sections~\ref{sec:theory} and~\ref{sec:unbiased_tpm} we present the theoretical background in order to explain how the intractable computations in this setting arise and we motivate the use of PPPs and honest random times. We highlight why the data augmentation procedure involving a PPP is a contextually and mathematically justifiable framework. In Section~\ref{sec:data_aug}, we develop the specific details of the pseudo-marginal McMC algorithm, which involves a data augmentation step. Next, in Section~\ref{sec:example_sim} , we carefully illustrate the steps and computations related to this algorithm by using as an example Weibull (time-dependent) rates. Section~\ref{sec:results} depicts the results from our algorithm when applied to a simulated process and when using real-life subject time grids. Both scenarios employ Weibull-distributed rates. We end with a discussion in Section~\ref{sec:discussion}. 

% Instructions to improve the following

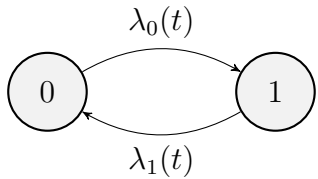
\begin{figure}[h]
\centering
\begin{tikzpicture}
\node[state] (1) {$0$};
\node[state, right of=1] (2) {$1$};
\draw (1) edge[bend left, above] node{$\lambda_{0}(t)$} (2)
(2) edge[bend left, below] node{$\lambda_{1}(t)$} (1);
\end{tikzpicture}
\caption{A two-state model describing a typical process of a disease marker}
\label{2_state}
\end{figure}

\section{Theoretical Background}\label{sec:theory}
When we only have access to the intermittent observations of the occupied state at clinical visits, the likelihood formulation is in terms of the transition probabilities. We denote by $\{X(t), t \ge 0\}$ the observed process.  Furthermore, let $\mathcal{H}(t) = \{X(s), 0 \leq s \leq t\}$ denote the corresponding history (collection of available information) of the process over the time interval $[0,t]$. 

\subsection{Transition intensities and probabilities}
Both the transition intensities and the transition probabilities govern the stochastic movement between states and they exhibit an explicit dependence on the history $\mathcal{H}(t).$ The latter take the general form, for $0 \leq s <t$, 
$$P_{ij}(s,t) = P_{ij}(s,t \ |\mathcal{H}(s)) = \newline P\big(X(t) = j \mid X(s) = i, \mathcal{H}(s)\big).$$ 
Recall also that the transition intensities $\lambda_{0}(t \ | \mathcal{H}(t-))$ and $\lambda_{1}(t \ | \mathcal{H}(t-))$ need to be estimated for the model. Moving forward, for the sake of convenience, we will suppress the dependency on $\mathcal{H}(t)$ in the notation.

The generator matrix $\boldsymbol{Q}(t)$, also called the transition intensity matrix, for a two-state  model (Figure \ref{2_state}) is given by
\begin{equation}
  \boldsymbol{Q}(t) =
\begin{bmatrix}
    -\lambda_{0}(t) & \lambda_{0}(t) \\
        \lambda_{1}(t) & -\lambda_{1}(t)
\end{bmatrix}.
 \label{Q_matrix}
\end{equation}

\noindent The corresponding transition probability matrix is denoted by $\boldsymbol{P}(s,t)$ and has the form
\begin{equation}
  \boldsymbol{P}(s,t) =
\begin{bmatrix}
    P_{00}(s,t) & \hspace{5pt}P_{01}(s,t)\\
        P_{10}(s,t) & \hspace{5pt} P_{11}(s,t)
\end{bmatrix}.
 \label{P_matrix}
\end{equation}

\noindent Note that the rows of $\boldsymbol{Q}(t)$ sum to zero while the rows of $\boldsymbol{P}(s,t)$ must sum to one. The transition probabilities satisfy the linear ordinary differential equations (ODE) obtained using the Kolmogorov forward equations as

\begin{equation}
\frac{\partial}{\partial t}
                    \label{ode}\boldsymbol{P}(s,t) = \boldsymbol{P}(s,t)\boldsymbol{Q}(t).
\end{equation}

\noindent Performing the matrix multiplication expansion on the right hand side, the $(1,2)$ and $(2,1)$ matrix entries, for example, respectively yield

\begin{align*}
    \frac{\partial}{\partial t}P_{01}(s,t) &= P_{00}(s,t)\lambda_{0}(t)  \  - P_{01}(s,t)\lambda_{1}(t), \\
    \frac{\partial}{\partial t}P_{10}(s,t) &= - P_{10}(s,t)\lambda_{0}(t) \  + P_{11}(s,t)\lambda_{1}(t).
    \end{align*}
    
\noindent Solutions to the above ODE's, and use of the fact that the transition probabilities out of a state sum to $1$, allows expressing the transition probabilities fully in terms of the transition intensities, with result:

\begin{align}
    P_{01}(s,t) & =\int_s^t \exp\biggl( - \int_v^t \bigl( \lambda_{0}(r) + \lambda_{1}(r) \bigr) dr \biggr) \lambda_{0}(v) dv, \nonumber \\
    P_{00}(s,t) & = 1 - P_{01}(s,t), \nonumber \\
    P_{10}(s,t) & 
     = \int_s^t \exp\biggl( - \int_v^t \bigl( \lambda_{0}(r) + \lambda_{1}(r) \bigr) dr \biggr) \lambda_{1}(v) dv \nonumber \\ 
   P_{11}(s,t) & = 1 - P_{10}(s,t) = 1-\int_s^t \exp\biggl( - \int_v^t \bigl( \lambda_{0}(r) + \lambda_{1}(r) \bigr) dr \biggr) \lambda_{1}(v) dv \nonumber \\
   & = \exp\biggl( - \int_s^t \bigl(\lambda_{0}(r) + \lambda_{1}(r) \bigr) dr \biggr) 
 +  \int_s^t \exp\biggl( - \int_v^t \bigl( \lambda_{0}(r) + \lambda_{1}(r) \bigr) dr \biggr) \lambda_{0}(v) dv. \nonumber \\ \label{TPM}  
 \end{align}

The expression for $P_{01}(s,t)$ can be simplified to some extent as shown below.
\begin{align}
P_{01}(s,t) & =  \int_s^t \exp\biggl( - \int_v^t \bigl( \lambda_{0}(r) + \lambda_{1}(r) \bigr) dr \biggr) \lambda_{0}(v) dv \nonumber  \\
& = \int_s^t   \exp\left(-\int_v^t \lambda_{1}(r) \,dr\right)  \; d\exp\left(-\int_v^t \lambda_{0}(r) \,dr\right).
\label{prb_expression}
\end{align}

\noindent  The second equality follows because 
$$
\exp\biggl(-\int_v^t \lambda_{0}(r)\, dr\biggr) \lambda_{0}(v)\, dv 
= d\exp\biggl(-\int_v^t \lambda_{0}(r)\, dr\biggr).
$$ 

\noindent  As we will see in Section~\ref{sec:example_sim}, the equations for the transition probabilities simplify nicely under the case of constant transition intensities $\lambda_{0}(t) = \lambda_{0}$ and $\lambda_{1}(t) = \lambda_{1}$.  However, the overarching problem is that, when we deal with time-dependent (non-homogeneous) transition rates, such as those that are of Weibull-type, then we still cannot compute analytically the integral in~(\ref{prb_expression}). Although there may be a possibility of finding some special function to represent the integral, this paper introduces and describes an alternative framework using Poisson point processes (PPP) for data augmentation which entirely avoids the need to compute the transition probabilities analytically. 

\subsection{PPP, honest times and data augmentation}\label{sec:honest}
The foundational approach of this distinct framework is to obtain an unbiased estimator of the transition probabilities, and thereby of the observed data likelihood, under this two-state setting. This is achieved by performing a data augmentation procedure by using a simpler PPP, instead of attempting to numerically solve the integrals for the transition probabilities. Consider an independent PPP $\Pi^1(dt)$ with intensity $\lambda_{1}(t)$. Then, by noting that the integrand in Equation~(\ref{prb_expression}) is the survival probability, integral~(\ref{prb_expression}) can be written as:

\begin{align}
(\ref{prb_expression}) & =
  \int_s^t P\left( \Pi^1\big((v,t]\big)=0\right)
 d\exp\left(-\int_v^t \lambda_{0}(r) \,dr\right) \nonumber \\
& = \mathbb{E}_{\Pi^1}\biggl( 
 \int_s^t {\bf 1}( \Pi^1( (v,t])=0) d\exp(-\int_v^t \lambda_{0}(r) dr)
\biggr) \; 
  \nonumber \\ &
= \mathbb{E}_{\Pi^1}\biggl( 
 \int_{\tau^1}^t  d\exp(-\int_v^t \lambda_{0}(r) dr)
\biggr) 
= 1- \mathbb{E}_{\Pi^1}\biggl( 
\exp(-\int_{\tau^1}^t \lambda_{0}(r) dr)
\biggr), \nonumber \\
 & \label{unbiasedTP}
 \end{align}
where  
$\tau^1 = s \vee \inf \{ r:   \Pi^1((r,t])=0 
\}$ is a {\it honest} random time depending only on the PPP $\Pi^1$ (see Appendix \ref{appC}). % for the  definition of  honest random times). 
Fubini's theorem has been applied between the first and the second equality operator, interchanging the expectation and the integral sign. \\

Similarly, we construct an independent PPP with intensity $\lambda_{0}(t)$ and define $\tau^0 = s \vee \inf \{ r:   \Pi^0((r,t])=0\}$, a honest random time depending only on the PPP $\Pi^0$ so that 
\begin{align}
& \mathbb{E}_{\Pi^0}\biggl( 
 \int_s^t {\bf 1}( \Pi^0( (v,t])=0) d\exp(-\int_v^t \lambda_{1}(r) dr)
\biggr) \;   \nonumber \\ 
&
= \mathbb{E}_{\Pi^0}\biggl( 
 \int_{\tau^0}^t  d\exp(-\int_v^t \lambda_{1}(r) dr)
\biggr) 
 = 1- \mathbb{E}_{\Pi^0}\biggl( 
\exp(-\int_{\tau^0}^t \lambda_{1}(r) dr)
\biggr). 
  \label{unbiasedTP:rho}
 \end{align}

The honest random times $\tau^1$ and $\tau^0$ are independent with respective probability distributions
\begin{align} 
f^{(k)}( \tau^k; \theta)  & =  \lambda_{k}(\tau^k)^{\eta^k} 
\exp\biggl( - \int_{\tau^k}^{t}
 \lambda_{k}(s) ds \biggr), {\bf 1 }(s \le \tau^k < t) \label{eq:fi_1}, 
\end{align} 
where $\theta$ is the collection of the transition rates and $\eta^k = {{\bf 1}(\tau^k > s)}, k = 0, 1$.

\section{Unbiased estimators of the transition probabilities}\label{sec:unbiased_tpm}
\subsection{Using the PPP $\Pi^1$ with intensity $\lambda_{1}(t)$}
Given a vector of honest random times $\tau^1$ depending only on the PPP $\Pi^1$ with intensity $\lambda_{1}(t)$ and the transition rates,  we introduce conditional transition probabilities as:
\begin{align}
P_{00}^{\tau^1}(s,t)
&=\exp\biggl( - \int_{\tau^1}^{t}
\lambda_{0}(r) dr
\biggr), 
\; P_{01}^{\tau^1}(s,t) =  1- P_{00}^{\tau^1}(s,t), 
\nonumber\\ 
P_{11}^{\tau^1}(s,t)
&= \exp\biggl( - \int_{s}
^{t} ( \lambda_{0}(r) +
\lambda_{1}(r) ) dr \biggr)  + P_{01}^{\tau^1}(s,t), \nonumber\\
P_{10}^{\tau^1}(s,t) & = 1-P_{11}^{\tau^1}(s,t).
 \label{cond_sigma_eqs}
\end{align}
\begin{theorem}\label{Th1}
The conditional transition probability matrix  $P^{\tau^1}(s,t)$
is an %(element-wise) 
unbiased estimator of $\mathbf{P}(s,t)$ defined in (\ref{TPM}). Further, the variance of each estimator is given by
\begin{align*} &
 \text{Var}\biggl( P_{k\ell}^{\tau^1}
 (s,t) \biggr)
 = \P( \tau^0_1 \vee \tau^0_2 \le \tau^1)
 - \P( \tau^0 \le \tau^1)^2,
 % & \\ & \le  \P( \tau^0 \le \tau^1) \P( \tau^0 > \tau^1)
&\end{align*}
where $\tau^1$ and $\tau^0$ are the honest times associated on the
interval $[s,t)$
with two independent PPP with rates $\lambda_{1}(t)$ and $\lambda_{0}(t)$, respectively.
\end{theorem}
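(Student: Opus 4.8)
The plan is to handle the two assertions---unbiasedness and the variance formula---separately, exploiting the fact that every entry of $P^{\tau^1}(s,t)$ equals, up to a deterministic additive constant, $\pm P_{00}^{\tau^1}(s,t)$.

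For unbiasedness I would first treat $P_{01}^{\tau^1}(s,t) = 1 - \exp(-\int_{\tau^1}^t \lambda_{0}(r)\,dr)$. Taking $\mathbb{E}_{\Pi^1}$ and comparing with the chain of identities culminating in (\ref{unbiasedTP}) gives at once $\mathbb{E}_{\Pi^1}[P_{01}^{\tau^1}(s,t)] = P_{01}(s,t)$, and hence $\mathbb{E}_{\Pi^1}[P_{00}^{\tau^1}(s,t)] = P_{00}(s,t)$ via the row-sum relation $P_{00}^{\tau^1} = 1 - P_{01}^{\tau^1}$. For the second row, since the additive factor $\exp(-\int_s^t(\lambda_{0}(r)+\lambda_{1}(r))\,dr)$ appearing in the definition of $P_{11}^{\tau^1}$ is deterministic, taking expectations and invoking the expression for $P_{11}(s,t)$ in (\ref{TPM})---which is precisely this factor plus $P_{01}(s,t)$---yields $\mathbb{E}_{\Pi^1}[P_{11}^{\tau^1}(s,t)] = P_{11}(s,t)$, and the entry $P_{10}$ then follows from $P_{10}^{\tau^1} = 1 - P_{11}^{\tau^1}$. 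This settles the first claim.

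For the variance, I would observe that the four conditional entries are $P_{00}^{\tau^1}$, $1-P_{00}^{\tau^1}$, $c+(1-P_{00}^{\tau^1})$ and $1-c-(1-P_{00}^{\tau^1})$ with $c = \exp(-\int_s^t(\lambda_{0}(r)+\lambda_{1}(r))\,dr)$ deterministic, so that all four share the common variance $\mathrm{Var}(P_{00}^{\tau^1}(s,t))$ and it suffices to compute this single quantity. The central idea is to re-read the survival factor $\exp(-\int_{\tau^1}^t \lambda_{0}(r)\,dr)$ probabilistically: conditionally on $\tau^1$, it is exactly the probability that an independent PPP $\Pi^0$ with intensity $\lambda_{0}$ places no point in $(\tau^1,t]$. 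Identifying this void event with $\{\tau^0 \le \tau^1\}$, where $\tau^0 = s \vee \inf\{r : \Pi^0((r,t])=0\}$, gives the representation
\begin{equation*}
P_{00}^{\tau^1}(s,t) = \mathbb{E}_{\Pi^0}\bigl[\mathbf{1}(\tau^0 \le \tau^1) \mid \tau^1\bigr].
\end{equation*}

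From here the moments follow by the tower property. The first moment reproduces $\P(\tau^0 \le \tau^1)$, while squaring the conditional expectation introduces two conditionally independent copies $\tau^0_1,\tau^0_2$ of $\tau^0$ (both independent of $\tau^1$), so that
\begin{equation*}
\mathbb{E}\bigl[(P_{00}^{\tau^1}(s,t))^2\bigr] = \P(\tau^0_1 \le \tau^1,\ \tau^0_2 \le \tau^1) = \P(\tau^0_1 \vee \tau^0_2 \le \tau^1).
\end{equation*}
Subtracting $\P(\tau^0 \le \tau^1)^2$ then yields the stated formula. I expect the main obstacle to be the careful justification of the identification $\{\Pi^0((\tau^1,t]) = 0\} = \{\tau^0 \le \tau^1\}$---in particular, handling the truncation at $s$ built into both honest times so that the event is genuinely determined by $\tau^0$ alone---together with the conditional-independence argument legitimizing the replacement of the square of a conditional expectation by a probability involving two independent copies of $\tau^0$.
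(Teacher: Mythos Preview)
Your proposal is correct and follows essentially the same route as the paper: unbiasedness is read off from the identity~(\ref{unbiasedTP}) together with the row-sum relations, and the variance is obtained by recognizing $P_{00}^{\tau^1}(s,t)=\P(\tau^0\le\tau^1\mid\tau^1)$ and then introducing two independent copies of $\tau^0$ to evaluate the second moment via conditioning on $\tau^1$. The paper's presentation is terser---it runs the second-moment chain in the opposite direction, starting from $\P(\tau^0_1\vee\tau^0_2\le\tau^1)$ and reducing it to $\E[(P_{00}^{\tau^1})^2]$---but the argument is the same, and the paper does not pause over the event identification or truncation issue you flag as a potential obstacle.
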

\begin{proof}
The unbiasedness of the estimators given in (\ref{cond_sigma_eqs}) $P_{k\ell}^{\tau^1}( s,t)$ $k ,\ell \in \{0,1\}$  follow from (\ref{unbiasedTP}) so that 
\begin{align*}
\E_{\Pi^1}\bigl[ 
P^{\tau^1}_{k\ell}
( s,t) \bigr]
= P_{k\ell}(s,t),
\quad  k ,\ell \in \{0,1\}. %\label{TPMsigma:unbiased}
\end{align*}

It is clear  that $\forall \; k,\ell\in \{0,1\}$, the conditional transition probabilities $P_{k\ell}^{\tau^1}( s,t)$ as  functions of the honest time $\tau^1$ have the same
variance. By conditioning and using the independence of $\tau^1$ and $\tau^0$ we obtain the second moment of the estimator $P_{00}^{\tau^1}(s,t)$ as follows.
\begin{align*}
&\P \bigl(  \tau^0_1 \vee \tau^0_2 
\le \tau^1 \bigr)
= \E\bigl[   
\P\bigl( \tau^0_1\vee \tau^0_2 
\le \tau^1 \big\vert \tau^1
\bigr) \bigr]
& \\ &
= \E\bigl[ 
\P\bigl(    \tau^0 \le 
\tau^1 \big\vert \tau^1\bigr)
\P\bigl( \tau^0_2 
\le \tau^1 \big\vert \tau^1
\bigr) \bigr]  & \\ & =
\E\biggl[\exp\biggl( - \int_{\tau^1}^{t_1} \lambda_{0}(r) dr\biggr)^2 \biggr]
= 
\E\bigl[ P_{00}^{\tau^1}(s,t)^2 \bigr] &
\end{align*}
while
$\E\bigl[ P_{00}^{\tau^1}( s,t) \bigr]
= \P( \tau^0 \le \tau^1).$
Hence the theorem.
\end{proof}

\subsection{Using the PPP $\Pi^0$ with intensity $\lambda_{0}(t)$}
By interchanging the role of $0$ and $1$ in 
(\ref{cond_sigma_eqs})
with honest random times $\tau^0$
depending only on the PPP $\Pi_0$
with intensity $\lambda_0(t)$,
we obtain alternative
unbiased estimators
of the transition probabilities:
\begin{align*}
P_{11}^{\tau^0}(s,t) & = \exp\biggl( - \int_{\tau^0}^{t} \lambda_{1}(r) dr \biggr), \; 
P_{10}^{\tau^0}(s,t)  =  1-  P_{11}^{\tau^0}(s,t), \nonumber\\ 
P_{00}^{\tau^0}(s,t) & = \exp\biggl( - \int_{s} ^{t} ( \lambda_{0}(r) +
\lambda_{1}(r) ) dr \biggr)  + P_{10}^{\tau^0}(s,t), \nonumber  \\
P_{01}^{\tau^0}(s,t) & = 1-P_{00}^{\tau^0}(s,t). %\label{cond_rho_eqs}
\end{align*}

A theorem similar to Theorem \ref{Th1} can be given in terms of $\tau^0$ and its proof follows the same line of thinking. 

\begin{theorem}\label{ThmRho}
The conditional transition probability matrix  $P^{\tau^0}(s,t)$ is an %(element-wise) 
unbiased estimator of $\mathbf{P}(s,t)$ defined in (\ref{TPM}). Further, the variance of each estimator is given by
\begin{align*}
 &\text{Var}\biggl( P_{k\ell}^{\tau^0}(s,t) \biggr)
 = \P( \tau^1_1 \vee \tau^1_2
 \le \tau^0)
 - \P( \tau^1 \le \tau^0)^2  & 
\end{align*}
where $\tau^1$ and $\tau^0$ are the honest times associated in the interval $[s,t)$ with two independent PPPs with rates $\lambda_{1}(t)$ and $\lambda_{0}(t)$, respectively.
\end{theorem}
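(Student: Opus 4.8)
The plan is to mirror the proof of Theorem~\ref{Th1} verbatim, exchanging the roles of states $0$ and $1$ throughout, since the entire construction in~(\ref{cond_sigma_eqs}) is symmetric under this swap. First I would establish unbiasedness. The four entries $P_{k\ell}^{\tau^0}(s,t)$ are built from the single ``base'' estimator $P_{11}^{\tau^0}(s,t) = \exp(-\int_{\tau^0}^t \lambda_1(r)\,dr)$, exactly as the $\tau^1$-estimators were built from $P_{00}^{\tau^1}(s,t)$. Taking expectation over $\Pi^0$ and invoking~(\ref{unbiasedTP:rho})—the analogue of~(\ref{unbiasedTP}) with $0$ and $1$ interchanged—gives $\E_{\Pi^0}[P_{11}^{\tau^0}(s,t)] = P_{11}(s,t)$; because the remaining three entries are affine functions of $P_{11}^{\tau^0}$ (a complement $1-(\cdot)$ or the addition of the deterministic survival term $\exp(-\int_s^t(\lambda_0(r)+\lambda_1(r))\,dr)$), their expectations reproduce the corresponding entries of $\mathbf{P}(s,t)$ as well.

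For the variance, the first observation is that all four entries differ from one another only by deterministic constants, so they share a common variance, and it suffices to compute $\text{Var}(P_{11}^{\tau^0}(s,t))$. The key step is the probabilistic reinterpretation of the estimator: introducing an independent honest time $\tau^1$ attached to a PPP $\Pi^1$ of intensity $\lambda_1$, one has $P_{11}^{\tau^0}(s,t) = \exp(-\int_{\tau^0}^t \lambda_1(r)\,dr) = \P(\tau^1 \le \tau^0 \mid \tau^0)$, because on the range $s \le \tau^0 < t$ the event $\{\tau^1 \le \tau^0\}$ coincides with $\Pi^1$ placing no point in $(\tau^0,t]$. Hence $\E[P_{11}^{\tau^0}(s,t)] = \P(\tau^1 \le \tau^0)$. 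To obtain the second moment I would introduce two copies $\tau^1_1,\tau^1_2$ of $\tau^1$ that are conditionally independent given $\tau^0$, condition on $\tau^0$, and factor $\P(\tau^1_1 \vee \tau^1_2 \le \tau^0 \mid \tau^0) = \P(\tau^1 \le \tau^0 \mid \tau^0)^2 = P_{11}^{\tau^0}(s,t)^2$; the outer expectation then yields $\E[P_{11}^{\tau^0}(s,t)^2] = \P(\tau^1_1 \vee \tau^1_2 \le \tau^0)$, and subtracting the square of the mean gives the claimed formula.

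The only place requiring genuine care—rather than pure bookkeeping—is verifying that the survival identity $\exp(-\int_{\tau^0}^t \lambda_1(r)\,dr) = \P(\tau^1 \le \tau^0 \mid \tau^0)$ is legitimate after the index swap: one must check that $\tau^1 = s \vee \inf\{r : \Pi^1((r,t])=0\}$ is genuinely independent of $\tau^0$ (so the conditioning and the factorization over $\tau^1_1,\tau^1_2$ are valid) and that $\{\tau^1 \le \tau^0\}$ really equals the emptiness of $\Pi^1$ on $(\tau^0,t]$ over the support $s \le \tau^0 < t$ appearing in~(\ref{eq:fi_1}). Since $\Pi^0$ and $\Pi^1$ are independent by construction and the honest-time laws in~(\ref{eq:fi_1}) are symmetric in the index $k$, this verification is immediate, and no obstacle arises beyond the one already resolved in Theorem~\ref{Th1}.
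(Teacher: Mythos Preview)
Your proposal is correct and follows exactly the approach the paper intends: the paper itself does not give a separate proof of Theorem~\ref{ThmRho} but simply states that ``its proof follows the same line of thinking'' as Theorem~\ref{Th1}, and your argument is precisely that line with the indices $0$ and $1$ swapped, invoking~(\ref{unbiasedTP:rho}) in place of~(\ref{unbiasedTP}) and conditioning on $\tau^0$ rather than $\tau^1$.
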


\begin{remark}
Given both vectors $\tau^1$ and $\tau^0$, 
we can provide another unbiased estimator of the conditional transition probability matrix
\begin{align*}
P^{\tau^1,\tau^0}_{00}(s,t)
&=\exp\biggl( - \int_{\tau^1}^{t}
\lambda_{0}(r) dr
\biggr), 
P^{\tau^1,\tau^0}_{01}(s,t)
=  1 - 
P^{\tau^1,\tau^0}_{00}(s,t), \nonumber\\
P^{\tau^1,\tau^0}_{11}(s,t)
&=\exp\biggl( - \int_{\tau^0}^{t}
\lambda_{1}(r) dr
\biggr), 
P^{\tau^1,\tau^0}_{10}(s,t)
=  1- 
P^{\tau^1,\tau^0}_{11}(s,t).
%\label{cond_sigma_rho_eqs}
\end{align*}
\end{remark}

\begin{remark}
    To reduce the variance we could also take average of the unbiased estimators of the transition probabilities with several independent copies of $\tau^1$ and or $\tau^0$, possibly choosing 
the representation with smaller variance.
\end{remark}

\subsection{Using $(\tau^1, \tau^0)$ pairs}\label{TPM:pair}
We have two independent PPPs, $\Pi^1$ with intensity $\lambda_{1}(t)$ and $\Pi^0$ with intensity $\lambda_{0}(t)$.  $\tau^1$ and $\tau^0$ are as defined in section \ref{sec:honest} and as illustrated in Figure \ref{fig:honest}. We can then write the transition probabilities (see equation (\ref{prb_expression})) using the pair $(\tau^{1}, \tau^{0})$ as:
\begin{align*} 
 P_{01}(s,t) &=  
  \int_s^t   \exp\biggl(-\int_v^t \lambda_{1}(r) dr\biggl)  \; d\exp\biggl(-\int_v^t \lambda_{0}(r) dr\biggl) & \nonumber \\ 
& =  1- 
\P_{\Pi^0\otimes \Pi^1}
\biggl(\tau^0 \le \tau^1 \biggr)
 =\P_{\Pi^0 \otimes \Pi^1 }\biggl(\tau^0 > \tau^1 \biggr) & \nonumber \\ 
 P_{00}(s,t) & = 1 - P_{01}(s,t) = 
 \P_{\Pi^0\otimes \Pi^1}
 \biggl(\tau^1 \ge \tau^0 \biggr) & \nonumber \\
 P_{10}(s,t) & =  
 \P_{\Pi^0 \otimes \Pi^1}
 \biggl(\tau^1 > \tau^0 \biggr)  & \nonumber \\
  P_{11}(s,t) & = 1 - P_{10}(s,t) =
  \P_{\Pi^0 \otimes \Pi^1}
  \biggl(\tau^0 \ge \tau^1 \biggr). & \nonumber  
  %& \\ %%\label{TPM:both}
\end{align*}
It is obvious that 
$P_{01}(s,t) <  P_{11}(s,t) \; \text{and} \; P_{10}(s,t) <  P_{00}(s,t).$ Note that 
\begin{align*}
 \bigl\{ \tau^0 = \tau^1 \bigr \}
 = \bigl\{ \tau^0= \tau^1=s\bigr\}
 = \bigl \{  \Pi^0( (s,t])=
 \Pi^1( (s,t]) = 0  \bigr \}  
\end{align*}
up to a set of probability zero
under the product distribution.

We can now give unbiased estimators of the transition probabilities as indicator functions of the inequalities.
\begin{align} 
& \widetilde P_{01}(s,t)  =  {\bf 1}(\tau^0 > \tau^1),  \;\widetilde P_{00}(s,t)  =   {\bf 1} (\tau^1 \ge \tau^0 ) \nonumber \\ 
& \widetilde P_{10}(s,t)  =   {\bf 1} (\tau^1 > \tau^0 ), \; \widetilde P_{11}(s,t)  =   {\bf 1} (\tau^0 \ge \tau^1 ). 
\label{TPM:both}
\end{align}

\begin{figure}
    \centering
    \includegraphics[width=12cm, height=8cm]{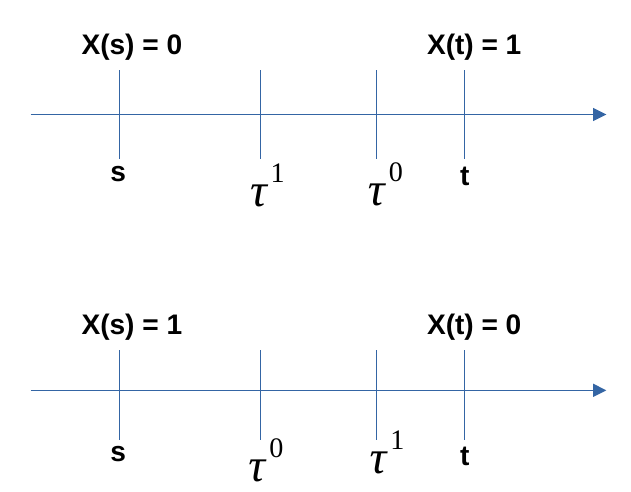}
    \caption{Honest times of the two independent Poisson point process over the interval $[s,t)$ for a two-state Markov process.}
    \label{fig:honest}
\end{figure}

We refer to Appendix \ref{subsec:appendix} for further insight into the properties of the two PPPs. We use (\ref{TPM:both}) in the data augmentation algorithm for estimation of the unknown rates.

\section{Data augmentation algorithm for an intermittently observed process}\label{sec:data_aug}
We now use the theory developed in section \ref{TPM:pair} for a fixed time grid $0=t_0< t_1< \dots < t_n$. Let $\Pi^k$ be a PPP  with intensity $\lambda_{k}(t)$ on the interval $(t_0, t_n]$ and for each consecutive times let 
\begin{align*}
\tau^k_i = t_{i-1} \vee \inf\{r:
 \Pi^k((r, t_i])=0 \},
 \quad i=1,\dots,n, k = 0, 1.
% \label{sigma_ppp_def}
\end{align*}

\noindent Note that these $\tau^k_i$'s  are independent across $i$ and
we do not need to sample all the PPP $\Pi^k$; it is sufficient to sample the independent random times $\tau^k_i$ for each $i$. From the above definition, for example, of $\tau^1$, remark that $P\big(\Pi^1(r,t_i]=0\big) = \exp(-\int_{r}^{t_{i}} \lambda_{1}(s)ds)$, and this precisely has uniform distribution in $(0,1]$. The cumulative hazard function $\Lambda_1(r, t_i] = \int_{r}^{t_{i}} \lambda_{1}(s)ds$ represents the expected number of transitions from $1$ to $0$ in $(r, t_i]$. %and for an inhomogeneous PPP with intensity $\lambda_{1}(t)$, is given by $\Lambda = \int_{r}^{t_{i}} \lambda_{1}(s)ds$. 
Also note that $P\big(\Pi^1(r,t_i]=0\big) = 0$ implies that $P\big(\Pi^1(s,t_i]=0\big) = 0, \; \forall s \ge r.$ 

As a direct consequence, we then have the following sampling scheme for $\tau^1_{i}$'s and $\tau^0_{i}$'s: for $y^k_i$ independent 1-exponential random variables, define 
\begin{align} &
\tau^k_i = t_{i-1}\vee \inf
\biggl\{r: \int_r^{t_{i}} \lambda_{k}(s)ds \le y_i^k
\biggr\}, \; \eta_i^k = {\bf{1}}(\tau^k_i > t_{i-1})  \label{sigma} 
\end{align}
and likelihood contributions
\begin{align} 
f_i^{(k)}( \tau^k_i; \theta) 
 &=  \lambda_{k}(\tau^k_i)^{\eta_i^k} 
\exp\biggl( - \int_{\tau^k_i}^{t_i}
 \lambda_{k}(s) ds \biggr)  \; {\bf 1 }( t_{i-1}\le \tau^k_i < t_i). \label{eq:fi}
\end{align} 

Note that when simulating values $\tau^k_i$, it must be checked whether they lie inside the interval $[t_{i-1}, t_i)$; if not, set $\tau^k_i = t_{i-1}$, under which case we say that truncation has occurred ($\eta_i^k = 0$) and the density $f^{(k)}_i(\tau^k_i;\theta)$ reduces to a point mass probability.

We furthermore designate by $\theta$ the collection of all the unknown parameters of the process $\{X(t)\}$, and we also assume that the realizations of the process are observed for $K$ subjects, $\{X_j(t)\}$ at time points $\{t_{0,j}, t_{1,j}, \dots, t_{n_j, j}\}$, $j = 1, \dots, K$. Then the observed data likelihood for $\theta$ is 
\begin{align}
L(\theta; X) 
& =   \prod_{j=1}^K \gamma( X(t_{0,j})) \prod_{i=1}^{n_j}
P_{X(t_{i-1}) X(t_i)}(t_{i-1,j},t_{i,j}) \nonumber \\
& = \prod_{j=1}^K \gamma( X(t_{0,j})) L_j(\theta; X_j).
\label{Obs:likelihood}
\end{align}

\subsection{Data augmentation and pseudo-marginal McMC algorithm}\label{McMC}
The next step is to use data augmentation with the auxiliary $(\tau^1_{i,j}, \tau^0_{i,j})$ variables in order to construct a pseudo-marginal Markov chain Monte Carlo (McMC) algorithm for parameter estimation.  The conditional likelihood for $\theta$ given $(\tau^1_{i,j}, \tau^0_{i,j})$'s and the observed data is given by 
\begin{align}
\widetilde L(\theta; \tau^1, \tau^0,  X) 
=   \prod_{j=1}^K \gamma( X(t_{0,j})) \prod_{i=1}^{n_j}
\widetilde P_{X(t_{i-1,j}) X(t_{i,j})}
( t_{i-1,j},t_{i,j}) ,\label{Augmented:likelihood}
\end{align}
with the property %(following the property in (\ref{TPMsigma:unbiased})
\begin{align*}
\E_{\Pi^1\otimes \Pi^0}\bigl[ \widetilde  L( \theta; \tau^1, \tau^0 ,X) \bigr]
& = L(\theta; X).
\end{align*}

\noindent This precisely shows that the augmented likelihood $\widetilde L( \theta; \tau^1, \tau^0, X)$ is an unbiased estimator of the likelihood $L( \theta; X)$. The Markov chain Monte Carlo (McMC) estimation algorithm which uses this unbiased estimator is referred to as a pseudo-marginal McMC.\\ 

\noindent Note then that the full joint density of $(\theta, \tau^1, \tau^0)$ can be expressed as the product
\begin{align} &
&    \mathcal{L}(\theta, \tau^1,\tau^0; X) =   
    L(\theta; \tau^1, \tau^0, X) \prod_{j=1}^K \prod_{i=1}^{n_i} \prod_{k=0}^1 f_{i,j}^{(k)}(\tau^k_{i,j};\theta).\label{Full:likelihood}
\end{align}

From (\ref{TPM:both}), we can write 
\begin{align*} 
& \widetilde P_{01}(t_{i-1,j},t_{i,j}) =  {\bf 1 }(\tau^0_{i,j} > \tau^1_{i,j}),  \widetilde P_{00}(t_{i-1,j},t_{i,j}) =  {\bf 1 }(\tau^1_{i,j} \ge \tau^0_{i,j} )  \\
& \widetilde P_{10}(t_{i-1,j},t_{i,j}) =  {\bf 1 } (\tau^1_{i,j} > \tau^0_{i,j} ), \widetilde P_{11}(t_{i-1,j},t_i,j)  =  {\bf 1 } (\tau^0_{i,j} \ge  \tau^1_{i,j} ).
\end{align*}

Given $\theta$ and observations, we carry out data augmentation by rejection sampling using the indicator functions giving the TPM. Given the augmented data, the likelihood (\ref{Augmented:likelihood}) is exactly equal to one. Hence, the likelihood function in (\ref{Full:likelihood}) factors into two parts, one involving only $\lambda_{0}(t)$ and the one involving $\lambda_{1}(t)$. To update each rate, we use Metropolis-Hastings algorithm.

Under this developed procedure, the algorithm for the estimation of the parameters $\theta = \big(\lambda_{0}(t), \lambda_{1}(t)\big)$ can be summarized by the following three steps, with Steps 2 and 3 being performed iteratively until convergence. \\ 

\begin{itemize}
    \item[Step 1.] Start with some initial values of $\theta$. Given the initial values and the data, simulate $(\tau^1_i, \tau^0_i)$ repeatedly using (\ref{sigma}) until the pair satisfy one of the inequalities estimating the transition probability for the given interval and states occupied. 
    \item[Step 2.] Update the parameters $\theta$ in a Metropolis-Hastings step using the augmented data, treating $(\tau^1_{i,j}, \tau^0_{i,j})$'s sampled in the previous step as given and fixed, 
    with a posterior proportional to 
    $$\pi(\theta) \prod_{j=1}^K \prod_{i=1}^{n_i} \prod_{k=0}^1 f_{i,j}^{(k)}(\tau^k_{i,j};\theta),$$ where $\pi(\theta)$ is the prior. Note that each rate can be updated separately as the above posterior, assuming independent prior distributions for the two rates can be written as a product of two terms. The log-posterior distributions, for $k=0, 1$, are given as follows. 
    \begin{align*}
    & \log p(\lambda_{k}(\cdot); \tau^1, \tau^0, \mathbf{X}) \propto \log \pi_k(\lambda_{k}()) +  \sum_{j=1}^K \sum_{i=1}^{n_i} \log f_{i,j}^{(k)}(\tau^k_{i,j};\lambda_{k}(\cdot ))  \\
    & \hspace{0.5cm} = \log \pi_k(\lambda_{k}(´\cdot )) + \sum_{j=1}^K \sum_{i=1}^{n_i} \biggl[\eta^k_{i,j} \log \lambda_{k}(\tau^k_{i,j})   - \int_{\tau^k_{i,j}}^{t_{i,j}} \lambda_{k}(s) ds \biggr]. 
    \end{align*}
     \item[Step 3.]
     Using the newly updated values of the $\theta$ from Step 2, update the $(\tau^1_{i,j}, \tau^0_{i,j})$ as described in Step 1. Then update $\theta$ using Step 2.
     \item[Step 4.] Iterate until the convergence.
\end{itemize} 
We point out that sampling of  $(\tau^1_{i,j}, \tau^0_{i,j})$ can be carried out in parallel for each $i$ and $j$ since they are sampled independent of each other. 
 
The choice of initial values of $\theta$ is crucial especially if the observed data are sparse. Also, the choice of prior distributions and the hyperparameters affect the performance of the algorithm. We will discuss these points in connection to constant rates (time homogeneous process) and Weibull rates in the following section.

\section{Example}\label{sec:example_sim}
To illustrate the data augmentation algorithm described in section \ref{McMC}, we consider Weibull-type transition rates which have the functional forms:
\begin{align}
    \lambda_{k}(t) & = \lambda_k \gamma_k t^{\gamma_k - 1}, \; \gamma_k, \lambda_k > 0, t > 0, \label{weibull_trans_0} 
\end{align}
where $\gamma_k$ and $\lambda_k$, $k = 0, 1$ are the shape and rate parameters respectively. 
The transition probabilities of equation (\ref{TPM}) in this case are, for $k \ne l, k, l = 0, 1$:
\begin{align}
    P_{kl}(s,t) 
    & = \exp\biggr( - \lambda_k \; t^{\gamma_k} - \lambda_l \; t^{\gamma_l} \biggl) \gamma_k \lambda_k  
     \int_s^t \exp\biggr(  \lambda_k \; v^{\gamma_k} + \lambda_l \; v^{\gamma_l} \biggl) \; v^{\gamma_k - 1} dv.\nonumber \\
    \label{TPM:Weib} 
\end{align}

Setting $\gamma_0 = \gamma_1 = 1$ gives the constant rates and corresponds to a time homogeneous process.  The above expression simplifies to
\begin{align}
    P_{kl}(s,t) & =\frac{\lambda_k}{\lambda_0 + \lambda_1} \biggl( 1 - \exp\{ - (\lambda_0 + \lambda_1) (t-s)\} \biggr). \label{TPM:EXP}
 \end{align}
We first describe the data augmentation algorithm in the Weibull case assuming independent gamma prior for the four parameters. The density of the Gamma distribution with the shape and rate parameters $(a,b)$ is given as
$$\pi(y;a,b) = \frac{b^a}{\Gamma (a)} y^{a-1} \exp\{- b y \}, \; a, b >0, y > 0.$$
Here $a$ and $b$ can be interpreted as the prior belief on the total number of events and total exposure, respectively. In the absence of prior information, we set both $a_j = b_j = 0.1$.  We will return to the choice of hyperparameters later.

\subsection{Updating the Weibull parameters}\label{sec:update_Weibull}
The integrated Weibull rates are 
\begin{align*}
\int_r^t \lambda_{k}(s)ds =   \lambda_k (t^{\gamma_k} - r^{\gamma_k}), \; k = 0, 1.
\end{align*}
To simplify the notations, we consider one interval $[t_{i-1}, t_i]$ here. The steps described below apply to all intervals $[t_{i-1,j}, t_{i,j}], i = 1, \dots, n_i, j = 1, \dots, N.$ Here $\theta = (\gamma_0, \lambda_0, \gamma_1, \lambda_1).$

\begin{itemize}
    \item[Step 1.]  Start with some initial values of $\theta$ (we will return to this later). Simulate $(\tau^k_i, \eta^k_i), k = 0, 1$ using the rejection sampling described earlier where $\eta^k_{i}   = {\bf 1}(\tau^k_{i} > t_{i-1})$ and 
    \begin{align*}
    \tau^k_{i} & = t_{i-1} \vee \biggr( t_{i}^{\gamma_{k}} - y_i/\lambda_k \biggl)^{1/\gamma_{k}} 
    = \biggr\{t_{i-1}^{\gamma_{k}} \vee ( t_{i}^{\gamma_{k}} - y_i/\lambda_k )\biggl\}^{1/\gamma_{k}}.
    \end{align*}
    \item[Step 2.] Metropolis-Hastings step to update $\theta$. We use generic notations for the shape and rate parameters of the Weibull distribution as $(\gamma, \lambda)$ and the augmented time as $\tau_i \in [t_{i-1},t_i]$. The following is applied separately for $(\gamma_0, \lambda_0, \tau^0_i)$ and  $(\gamma_1, \lambda_1, \tau^1_i)$.

The augmented log-likelihood with $\tau_i \in [t_{i-1},t_i]$ and 
truncation indicators $\eta_i = {\bf 1} (\tau_i > t_{i-1}) \in \{0, 1\}$ is given by 
\begin{align*} &
  \bigl( \log(\lambda) + \log(\gamma) \bigr) \sum_i \eta_i 
 + (\gamma-1)\sum_i \eta_i \log(\tau_i) - {\lambda} \sum_i ( t_i^{\gamma}
 - \tau_i^{\gamma}).
&\end{align*}
We integrate out  $\lambda$ w.r.t. its Gamma$( a,b)$ prior from the above likelihood and obtain a log-marginal density for $\gamma$
\begin{align*} &  \text{const.}+  \log \pi(\gamma) + 
\log( \gamma)\sum_i \eta_i
+\gamma\sum_i \eta_i \log(\tau_i) & \\ &- \biggl( a + \sum_i \eta_i \biggr)
\log\biggl( b +  \sum_i ( t_i^{\gamma}
 - \tau_i^{\gamma}) \biggr) 
&\end{align*}
where the constant does not depend on $\gamma$. 

We assume a Gamma$(\alpha,\beta)$  prior for $\gamma$.  We use a log-normal proposal  with $\log \gamma^* \sim {\mathcal N}(\log\gamma,\sigma^2)$, and update $\gamma$ with Metropolis-Hastings acceptance log-ratio 
\begin{align*} &   
\biggl( \alpha + \sum_i \eta_i \biggr)
\bigl( \log( \gamma^*) - \log( \gamma)  \bigr)
+\biggl( -\beta +\sum_i \eta_i \log(\tau_i) \biggr ) (\gamma^*-\gamma)& \\ &+\biggl( a + \sum_i \eta_i\biggr)
\biggl\{ \log\biggl( b +  \sum_i ( t_i^{\gamma}
 - \tau_i^{\gamma}) \biggr) 
- \log\biggl( b +  \sum_i ( t_i^{\gamma^*}
 - \tau_i^{\gamma^*}) \biggr) \biggr\},
&\end{align*}
 
Then, given $\gamma$, 
$\lambda$ is sampled from its full conditional  distribution
\begin{align*}
\lambda \sim \text{Gamma}\biggl(  a + \sum_i \eta_i, b +   \sum_i ( t_i^{\gamma}
 - \tau_i^{\gamma} ) \biggr).
 \end{align*}
\end{itemize}

In the  constant intensities model, $\gamma = 1$, we  update only $\lambda$ directly from its full conditional distribution. For the sake of completeness, we give detailed pseudo-marginal McMC algorithm for the constant rates in Appendix \ref{appB}. \\

\noindent \textbf{Choice of the initial values.} The pseudo-McMC algorithm can be sensitive to the initial values especially when the observed times are sparse. We suggest to use the crude estimates obtained by treating the sampling points with changes in the observed state as
if they were continuous time  transitions. For the Weibull case, we can assign one as the initial values for the shape parameters and estimate the rate parameter $\lambda_0$ as the ratio of the number of direct transitions from $0$ to $1$ to the sum of lengths of observation intervals starting from state $0$. The initial value of $\lambda_1$ can be obtained correspondingly. We expect that these initial values of the rates provide lower bounds for the maximum likelihood estimates. 

In case of the constant rates model, the estimators of rates, if the observed times were exact transition times, would be the ones that are described above. Since they are not the exact transition times, more (or at least as many as there are observed) transitions within each observed intervals are expected so the numerator would be as large as in the estimator, and the denominator would be smaller or equal to the one in the estimator.  Alternatively, one can use \texttt{msm}  package to obtain estimates of the constant rates for intermittently observed process. Similar suggestions are also discussed in \cite{Titman2011}. \\

\noindent \textbf{Choice of the prior and hyperparameters.} We have chosen a gamma distribution as a prior because of conjugacy so that the posterior distribution of the rate parameter is also a gamma distribution. The improper prior may not be a good choice here because of the sparsely observed data.  \\

\noindent \textbf{Model selection.} 
Approaches to model assessment based on comparison of parametric and nonparametric estimates, model expansion etc. may be problematic in our case. We refer to Section 5.2.3, \cite{Cook2018} for further discussion on this. In Appendix \ref{model_selection},  we briefly describe  the model expansion approach especially for the nested models that are considered here. 

\section{Results}\label{sec:results}

We first display the results for the Weibull rates (\ref{weibull_trans_0})  under two different scenarios; one is based on simulation of a continuous-time Markov process and the other one uses real-life time grids from clinical visits data. In both scenarios, the initial distribution is set to $(0.5, 0.5)$ and  the true shape and rate parameters: $\gamma_{0} = 1.2$, $\gamma_{1} = 0.8$, $\lambda_{0} = 0.006$, and $\lambda_{1} = 0.023$. Furthermore, $5,000$ McMC samples were generated, with a burn-in amount of samples set to $1000$. We present the results under the homogeneous case in Appendix \ref{appE}.

\subsection{Scenario 1: simulated data}\label{sec:simulated}
In this scenario we first simulate continuous-time Markov process with Weibull rates for fixed  number of subjects $N$, the time horizon $T$, and then sample $m$ points discretely from the simulated continuous process to create snapshots in time. The chosen configuration for this particular simulation is $N=100$, $m=50$, and $T=100$. Figure \ref{fig:posterior_plots_weibull_sim} shows the posterior densities of the four parameters and the summary is shown in Table \ref{tab:weibull_results}. It is clear from the plots that the posterior medians of the two shape parameters fall above and below one as expected. The posterior median of $\lambda_1$ is almost at the true value while that of $\lambda_0$ is slightly above the true value. It is to be noted that intermittently observed process may carry different amount of information for each parameter and hence, some parameters may be estimated more precisely than the others. 

\begin{table}[ht]
\caption{Weibull rates. Posterior medians with (95\% credible intervals) of parameters under simulated and real-life time grids. }
\centering
\footnotesize
\begin{tabular}{lccc}
\hline
\textbf{Parameters} & \textbf{True}  & \textbf{Simulated grid} & \textbf{Real-life grid} \\
\hline \\
\shortstack[l]{${\gamma}_0$} & 1.2 & \shortstack[l]{1.14 (0.91, 1.44)} & \shortstack[l]{1.12 (1.01, 1.27)} \\  \\
\shortstack[l]{${\lambda}_0$} & 0.006 & \shortstack[l]{0.008 (0.002, 0.024)} & \shortstack[l]{0.011 (0.004, 0.022)} \\ \\
\shortstack[l]{${\gamma}_1$} & 0.8 & \shortstack[l]{0.85 (0.64, 1.10)} & \shortstack[l]{0.83 (0.69, 0.99)} \\ \\
\shortstack[l]{${\lambda}_1$} & 0.023 & \shortstack[l]{0.022 (0.006, 0.061)} & \shortstack[l]{0.019 (0.006, 0.051)} \\
\hline
\end{tabular}
\label{tab:weibull_results}
\end{table}

\begin{figure}
    \centering
    \includegraphics[width=0.83\linewidth, height=7cm]{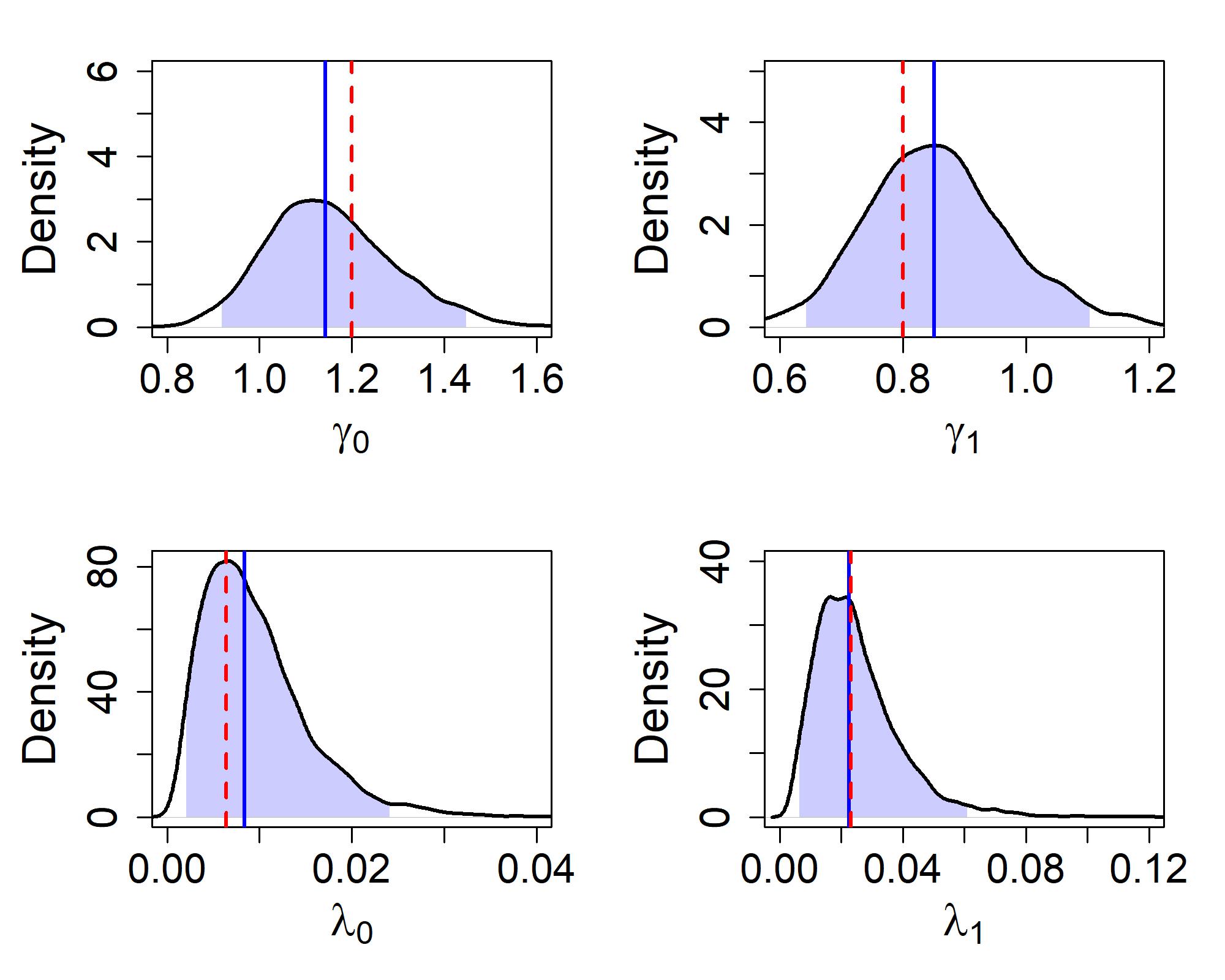}
    \caption{Simulated time grid and Weibull rates. Posterior density plots for the four parameters $\lambda_{0}$, $\lambda_{1}$, $\gamma_{0}$ and $\gamma_{1}$ under the simulated time grid. The dashed red line shows the true value, the solid blue line indicates the posterior median, and the 95\% credible intervals are shaded.}
    \label{fig:posterior_plots_weibull_sim}
\end{figure}

\subsection{Scenario 2: real-life time grid}\label{sec:real}
Here we employ a real-life time grid from $N=93$ subjects according to their clinical visits during $T=48$ months period, and simulate states occupied according to the Weibull rates. The subjects in this scenario have varying number of observations, unlike in our simulated example where $m$ was fixed.  This is what makes real-life clinical data complex, as it introduces irregularity in observation schedules and challenges in aligning individuals on a common assessment timeline. To better understand the real-life grid, we present descriptive statistics in Table \ref{tab:visit_summary}.

\begin{table}[ht]
\centering
\caption{Summary statistics of the number of visits per subject in the real-life time grid.}
\begin{tabular}{llll}
\hline
{Statistic} & Range & Median (Q1, Q3) & Mean (SD) \\
Estimate & (9, 48) & 29 (17, 39) & 29 (12) \\
\hline
\end{tabular}
\label{tab:visit_summary}
\end{table}

Figure \ref{fig:posterior_plots_weibull_real} shows the posterior densities of the four parameters and the summary is shown in Table \ref{tab:weibull_results}. As in section \ref{sec:simulated}, the posterior medians of the two shape parameters fall above and below one but they are farther than those shown in Figure  \ref{fig:posterior_plots_weibull_sim}, as expected. The posterior medians of $\lambda_0$ and $\lambda_1$ are also farther from those for the simulated data. Here also $\lambda_1$ is estimated more precisely than $\lambda_0$. 

\begin{figure}
    \centering
    \includegraphics[width=0.83\linewidth, height=7cm]{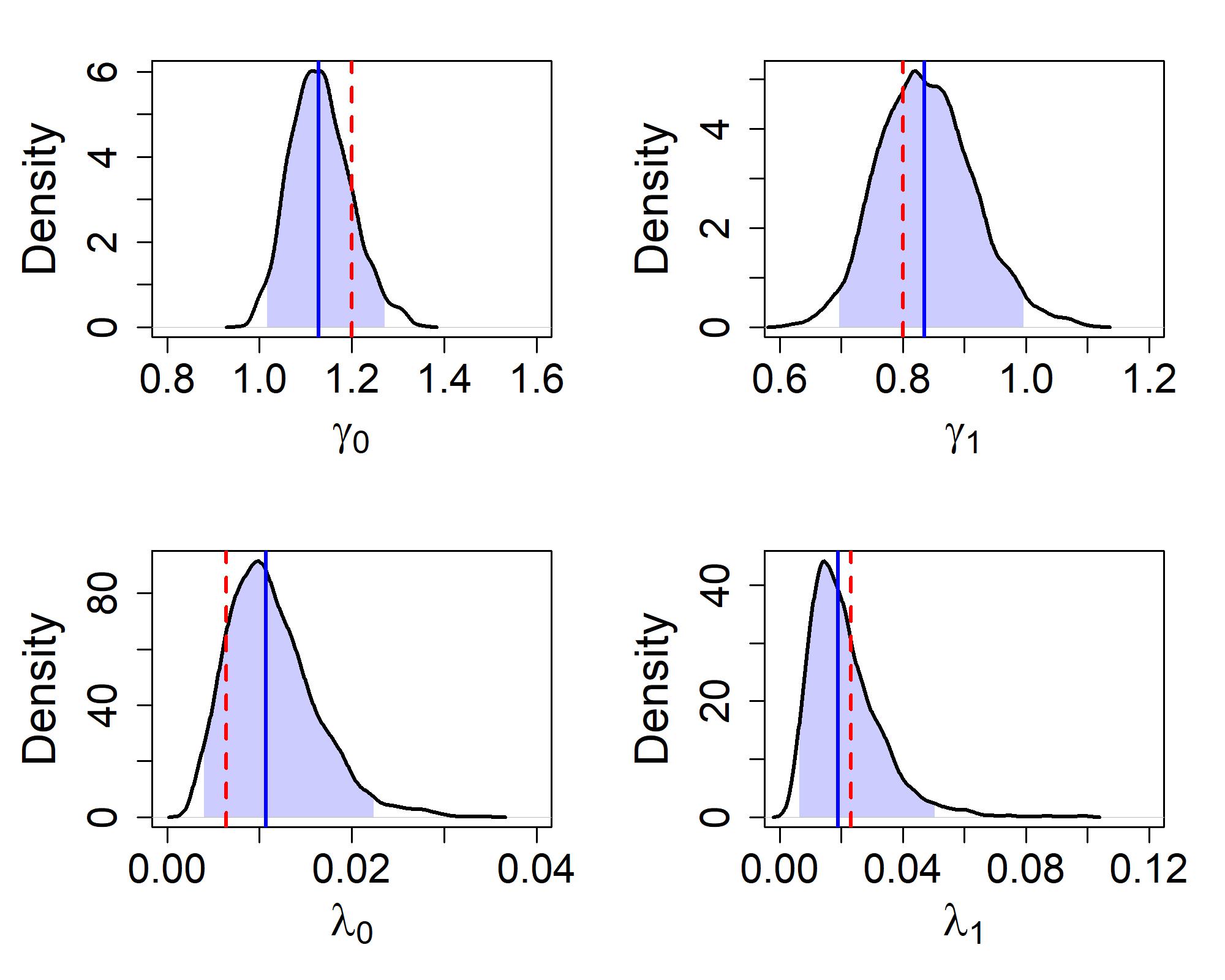}
    \caption{Real-life time grid and Weibull rates. Posterior density plots for the four parameters $\lambda_{0}$, $\lambda_{1}$, $\gamma_{0}$ and $\gamma_{1}$ under the real-life time grid. As before, the dashed red line shows the true value, the solid blue line indicates the estimated value,and the 95\% credible intervals are shaded.}
    \label{fig:posterior_plots_weibull_real}
\end{figure}

\section{Discussion}\label{sec:discussion}

In this paper, we have proposed a very simple and novel pseudo-marginal McMC algorithm which uses a data augmentation procedure with honest times. We implement it to estimate the transition intensities for two-state non-homogeneous Markov models when the trajectories are observed intermittently on irregular and different time grids. Our approach is shown to be very adaptable because it can perform reliable parameter estimation under both homogeneous rates as demonstrated by Tables~\ref{tab:homogeneous_sim_results} and~\ref{tab:homogeneous_sim_results_realgrid} (the results are very similar to the \texttt{msm} package) and non-homogeneous rates (Table~\ref{tab:weibull_results}), irrespective on the  coarseness of  the underlying time grids.
When the integrated PPP intensities have  an analytic form, our algorithm does not introduce
 time-discretization errors as the \texttt{nhm} package \cite{Titman2011} does when solving numerically the linear ODE for the transition matrices.
The \texttt{msm} package \cite{Jackson2011} based on the spectral decomposition or series expansion, should be preferred for time homogeneous Markov models to avoid unnecessary computations needed for non-homogeneous case.\\

The precision with which the model parameters are estimated depends on the sparseness of the observed times, i.e., $t_{i,j} - t_{i-1,j}$. This point is explored in detail for time-homogeneous Markov models in Section 5.2.2, \cite{Cook2018} and the readers interested in further details may refer to it. The derivatives of the transition probabilities and the Fisher information for a general multistate model with intermittent observations are given in the appendix in \cite{Titman2011}. In general, it is difficult to use the Fisher information to comment directly on the precision and the time spacings. The proportions of truncated honest times are related to the information content of the intermittently sampled data. \\ 

Given the simulated time grids, we only have access to a partially sampled trajectory of the process for each subject. As a next step, two research objectives are of interest from a clinical point of view, and our proposed data augmentation algorithm can be extended to meet these objectives. The first aim is to estimate the expected total time spent in state $0$ over the time interval $[0,T]$. The second goal is to be able to predict the sample path for the selected time grid. \\

It is important to acknowledge that, for the purposes of illustration, we have only focused on Weibull-type rates as an example. Nonetheless, we point out that our algorithm also works with other time-dependent rates, such as Gompertz rates. Additionally, our proposed workflow can be extended to multiple correlated non-homogeneous Markov two-state processes. Implications and extensions of the proposed method for multistate models are also of interest.  
Beyond parametric models, in a forthcoming
work we are also combining the data augmentation
via honest times together with a  Bayesian nonparametric model based on
generalized Gamma subordinators (see \cite{Gasbarra2015}).

\bigskip

\noindent \textbf{ACKNOWLEDGMENTS}  \newline Etienne Sebag, was funded by Research Council of Finland (338507). Authors thank Terhi Ollila, Department of Ophthalmology, Helsinki University Hospital, Helsinki, Finland for providing real time grid based on her doctoral study data.

%\bibliography{bibliography.bib}       % Bibliography file (usually '*.bib')
%\printbibliography %Prints bibliography
\bibliography{bibliography}

%% or include bibliography directly:
% \begin{thebibliography}{}
% \bibitem{b1}
% \end{thebibliography}

%%%%%%%%%%%%%%%%%%%%%%%%%%%%%%%%%%%%%%%%%%%%%%
%% Appendix:                         %%
%%%%%%%%%%%%%%%%%%%%%%%%%%%%%%%%%%%%%%%%%%%%%%
\appendix

\noindent \textbf{Appendix}

\section{Honest times}\label{appC}
%\begin{Definition}
 {\bf Definition}
 A random time $\tau$ is an {\it honest time }
 w.r.t. to the filtration $\F=({\mathcal F}_t: t \ge 0)$
 if and only if for every $t > 0$
there exists an ${\mathcal F}_t$-measurable random variable $\tau_t$ such that $\tau = \tau_t$
on the event $\{\tau < t \}$. Then  it is always possible
to choose $\tau_t$ such that
$\tau_t \le \tau$ $\P$-a.s. \cite{jeulin, aksamit}
%\end{Definition}
\\

Intuitively an honest time is the last time
something  happens, so that when it happens we see in our information filtration that  it is happening   without necessarily
knowing  that  it  is happening 
for the last time, and because of that it is not
a stopping time.
\\

For example,
if  $ 0 = \sigma_0\le \sigma_1\le \sigma_2 \le \dots \le \sigma_k \le \dots$ are stopping times 
in the filtration $\F$
and
$T>0$, the ${\mathcal F}_T$-measurable random time
\begin{align*}
 \tau_T = \sup_{k\in \N} 
 \{ \sigma_k {\bf 1}(\sigma_k \le T ) \}
\end{align*}
is an honest time but not necessarily  a stopping time, because
\begin{align*}
\{ \tau_T \le t \} = \bigcap_{k\in \N}
\biggl( \underbrace{  \{ \sigma_k > T\}
}_{\in  {\mathcal F}_T}  \cup
\underbrace{ \{\sigma_k \le t\} }_{\in {\mathcal F}_t }  \biggr)   
\end{align*}
which is not necessarily  in  ${\mathcal F}_t  \text{ when } t <T$.
Moreover  on the event $\{\tau_T \le t\}$,
$\tau_T = \tau_t$,  which
\if 0
where
\begin{align*}
 \tau_t := \sup_{k\in \N} 
 \bigl\{ \sigma_k {\bf 1}( \sigma_k \le t) \bigr\}  
\end{align*}\fi  
 is ${\mathcal F}_t$-measurable.

\section{Further insight into the two PPPs} 
\label{subsec:appendix} %% if no title is needed, leave empty \section*{}.
Given the two independent PPP with intensity
$\lambda_{0}(t)$ and $\lambda_{1}(t)$ and the initial state $X(s)$, then one can construct the trajectory of $(X(r): s<r\le t)$. The interpretation of the aforementioned construction is as follows. When $X(s)=0$, then the condition $X(t)=1$ holds if, and only if, the random times defined above
satisfy $\tau^0 > \tau^1$, regardless of the number of possible transitions of $X$ between the  states $0$ and $1$ in the interval $(s,t]$. This directly means that, in the interval $(s,t]$, there is at least one $0\rightarrow 1$ transition, 
and after the last $0\rightarrow 1$ transition, there are no more $1\rightarrow 0$ transitions, hence $X(t) = 1$. We therefore interpret $\tau^0$ as the last time point such that there is at least one transition from $0 \rightarrow 1$ in the interval $(s, \tau^0]$ and no such transition between $(\tau^0, t]$. Similarly, $\tau^1$ is the last time point such that there is at least one transition from $1 \rightarrow 0$ in the interval $(s, \tau^1]$ and no such transition between $(\tau^1, t]$. Indeed, for the process to be in state 1 at time $t$, the last transition can not be from 1 to 0 and hence, $\tau^1 < \tau^0$.  This is equivalent to saying that $\tau^1$ is the first time point such that the corresponding PPP has no jump points inside the interval $(\tau^1, t]$. 
We have
\begin{align*} 
& P_{11}(s,t)
  - P_{01}(s,t) =
   P_{00}(s,t) - P_{10}(s,t)  
  =
  \exp\biggl( 
  - \int_s^t \bigl( \lambda_{0}(u)´+
  \lambda_{1}(u) \bigr)du \biggr) 
&\end{align*}
The assumption
\begin{align*}
\int_s^{\infty} \bigl( \lambda_{0}(v)+ \lambda_{1}(v) \bigr)dv 
= \infty
\end{align*}
implies mixing in total variation
\begin{align*}
\lim_{t\to\infty} \sum_{j=1,2}
 |P_{1j}(s,t)- P_{2j}(s,t)|
 =0 \end{align*}  
without  stationarity conditions.  Indeed
for two coupled copies of the
Markov process, $X_t$ and $X'_t$, driven by the 
same pair of independent
Poisson point processes 
$\Pi^1$ and $\Pi^0$ with respective
intensities $\lambda_{1}(t)$
and $\lambda_{0}(t)$,
 started at time
$s$ with
$X_s=1$ 
and $X'_s=0$, the  stopping time
\begin{align*}
\tau = \inf\bigl\{  t: \Pi^1( (s,t] ) +
\Pi^0( (s,t]) >0 \bigr\}
\end{align*}
is a coupling time, such that  $X_t = X'_t, \;   \forall t \ge \tau$, and
 the  total variation distance is given by
$\text{TV}\bigl( X_t, X'_t)= 2 P( \tau > t), \; \forall \;  t > s.$

\section{Constant transition rates and pseudo-marginal McMC algorithm}\label{appB}
Here we describe a pseudo-marginal McMC using both $\tau^1$ and $\tau^0$. 
\begin{enumerate}
    \item Start with some initial values of $(\lambda_0, \lambda_1)$, and then simulate the initial values of $(\tau^1_{i,j}, \tau^0_{i,j})$ as in Step 1, Section \ref{sec:update_Weibull} with $\gamma_0=\gamma_1=1$. Initial value of, for example,  $\lambda_0$, can be obtained as the ratio of the number of intervals starting in $0$ and ending in $1$ to the sum of the interval lengths starting in $0$. We sample the pairs by rejection sampling. \\
    Accept the pair $(\tau^1_{i,j}, \tau^0_{i,j})$ when they satisfy the condition constrained by the states at $t_{i-1,j}$ and $t_{i,j}$. 
    \item For given $(\tau^1_{i,j}, \tau^0_{i,j}), i = 1, \dots, n_j, j = 1, \dots, N$ and $X$, the posterior distribution of  $(\lambda_0, \lambda_1)$ is proportional to the likelihood in~(\ref{Full:likelihood}) and the independent gamma prior. Note that the first term in~(\ref{Full:likelihood})  is one due to the way   $(\tau^1_{i,j}, \tau^0_{i,j}), i = 1, \dots, n_j, j = 1, \dots, N$ are sampled. \\
    Also, the likelihood factorizes into two parts; one involving only $\lambda_0$ and the other one involving only $\lambda_1$. We can write the posterior distribution of $\lambda_k$ by collecting all terms involving $\lambda_k$ as follows.
    \begin{align*}
    & P(\lambda_k \mid \tau^1, \tau^0, \mathbf{X})  
     \propto \lambda_k^{a_k-1} \exp\{- b_k \lambda_k\} \prod_{j=1}^K \prod_{i=1}^{n_i}  f_{i,j}^{(k)}(\tau^k_{i,j};\lambda_k) \\
    & = \lambda_k^{a_k + \sum\limits_j \sum\limits_i \eta^k_{i,j} -1}  \exp\{- \lambda_k (b_k +\sum_j \sum_i (t_{i,j} - \tau^k_{i,j})\},
    \end{align*}
which is a gamma density with          
$(a_k + \sum\limits_{j,i} \eta^k_{i,j})$ and 
$(b_k +\sum\limits_{j,i} (t_{i,j} - \tau^k_{i,j}))$ as the shape and the scale parameters, respectively and $\eta^k_{i,j} = {\bf 1}(\tau^k_{i,j} > t_{i-1,j})$.

\item Simulate values of $(\lambda_0, \lambda_1)$ from the above posterior distributions.
\item Simulate pairs $(\tau^1_{i,j}, \tau^0_{i,j})$ as in step 1.
\item Repeat until the convergence.
%\item Note: choose the gamma prior parameters so that it imposes penalty on the parameters.
\end{enumerate}

\section{Model selection in the class of nested models}\label{model_selection}
We can use the concept of Bayes factor and the model indicator $m = 0, 1.$ In case of exponential and Weibull case, the shape parameter $\gamma$ is equal to one or is different from one. We compare two Bayes models, $\mathcal{M}_m = \{\mathcal{P}_m, \pi^1_m\}$, $m = 0, 1$ (Section 8.2.1, \cite{Robert2025-gg}). The common prior is a mixture of the Dirac-measure at $\gamma=1$ (say $\widetilde\pi^1_0$) and the prior for the continuous $\gamma$ (say $\widetilde\pi^1_1$). Let the prior for the parameters that are not subjected to the Bayes test, rate parameters of the Weibull model, have priors $\pi^1_{kl}(\lambda_{kl}), k, l = 0, 1.$ The Bayes test for testing the point-null hypothesis $H_0: \mathcal{M}_0$ versus $H_1: \mathcal{M}_1$  can be based on the Bayes factor
\begin{align}
B_{01} & = \frac{m_0(\mathbf{X})}{m_1(\mathbf{X})},    
\end{align}
where $m_k(\mathbf{X})$ is the marginal distribution of the data $\mathbf{X}$ given the model $\mathcal{M}_k$ and 
\begin{align*}
m_0(\mathbf{X}) & = \int\limits_{\theta} \P(\mathbf{X} \mid \theta) \pi^1_0(\theta) d\theta \\
& = \int\limits_{\lambda_0, \lambda_1} \P(\mathbf{X} \mid \theta_0) \pi^1_{01}(\lambda_0) \pi^1_{10}(\lambda_1)d\lambda_0 d\lambda_1,
\end{align*}
where $\theta_0 = (\gamma_0=1, \lambda_0, \gamma_1=1, \lambda_1)$. For continuous prior $\pi^0$ of $\theta = (\gamma_0, \lambda_0, \gamma_1, \lambda_1)$ we have the denominator
\begin{align*}
m_1(\mathbf{X}) & = \int\limits_{\theta} \P(\mathbf{X} \mid \theta) \pi^1_1(\theta) d\theta,
\end{align*}
where $\pi^1_1(\theta) = \widetilde\pi^1_1(\gamma_0) \pi^1_{01}(\lambda_0)\widetilde\pi^1_1(\gamma_1) \pi^1_{10}(\lambda_1)$. We have assumed gamma prior distributions for all four parameters. 
We need to compute the marginal likelihoods. For this purpose we can use one or all of the four algorithms described in a tutorial article \cite{GRONAU201780}.

\section{Example: time homogeneous Markov process}\label{appE}
Here, we present the results under the homogeneous case ($\gamma_0 = \gamma_1 = 1)$ for both a simulated time grid (using the same configuration of $N=100$, $m=50$ and $T=100$) as well as the same real-life visiting times of subjects. 

For the simulated process, we consider two sets of rate parameters; one has $\lambda_0$ and $\lambda_1$ fixed close to each other ($\lambda_0 = 0.047$ and $\lambda_1 = 0.051$) and the other where they are set further apart from each other ($\lambda_0 = 0.1$ and $\lambda_1 = 1.0)$. Table \ref{tab:homogeneous_sim_results} shows the results obtained using the proposed data augmentation algorithm (Appendix \ref{appB}) and the \texttt{R} package  \texttt{msm}. The results are similar in both scenarios.

{\renewcommand{\arraystretch}{2}%
\begin{table}[ht]
\centering
\caption{Simulated data for a time homogeneous Markov model. Posterior medians/maximum likelihood estimate with respective 95\% CIs using different estimation methods. Two sets of parameters are considered; one where the two rates are close and the other one where the two rates are far apart.}
\begin{tabular}{lcc}
\hline
Parameter & Data augmentation & msm \\
\hline
\({\lambda}_0\) = 0.047 & 0.048 (0.042, 0.055) & 0.048 (0.042, 0.055) \\
\({\lambda}_1\) = 0.051 & 0.052 (0.045, 0.060) & 0.052 (0.045, 0.060) \\
\hline
\({\lambda}_0\) = 0.100 & 0.103 (0.087, 0.121) & 0.103 (0.087, 0.122) \\
\({\lambda}_1\) = 1.000 & 1.126 (0.967, 1.307)  & 1.134 (0.971, 1.324) \\
\hline
\end{tabular}%
\label{tab:homogeneous_sim_results}
\end{table}
}

We note that estimation of the parameters in the time homegeneous case is possible by directly using the likelihood function in Equation (\ref{Obs:likelihood}). For the real-life time grid, we use ($\lambda_0 = 0.047$ and $\lambda_1 = 0.051$) to simulate the states occupied at each observed time. We then use the exact likelihood, the proposed data augmentation algorithm, 
and the \texttt{R} packages \texttt{msm} and \texttt{nhm}. The results comparing the four estimation approaches are shown in Table \ref{tab:homogeneous_sim_results_realgrid}. The nhm package fails to provide the confidence intervals due to numerical issues relating to the computation of the Hessian. 

{\renewcommand{\arraystretch}{2.2}%
\begin{table}[ht]
\centering
\caption{Real-life time grid. Posterior medians/maximum likelihood estimate with respective 95\% CIs using different estimation methods. (DA = Data Augmentation, Exact = Exact likelihood based estimation)}
\begin{tabular}{lcccc}
\hline
Parameter & DA & msm & nhm & Exact \\
\hline
\({\lambda}_0\) = 0.047 & \shortstack[c]{0.047\\(0.037, 0.061)}
  & \shortstack[c]{0.046\\(0.036, 0.058)} & \shortstack[c]{0.040\\(N/A, N/A)} & \shortstack[c]{0.047\\(0.0378, 0.0610)} \\
\({\lambda}_1\) = 0.051 & \shortstack[c]{0.048\\(0.038, 0.063)} & \shortstack[c]{0.047\\(0.038, 0.060)} & 
\shortstack[c]{0.044\\(N/A, N/A)}  & \shortstack[c]{0.049\\(0.0391, 0.0638)} \\
\hline
\end{tabular}%
\label{tab:homogeneous_sim_results_realgrid}
\end{table}
}

\begin{figure}
    \centering
    \includegraphics[width=0.83\linewidth, height=7cm]{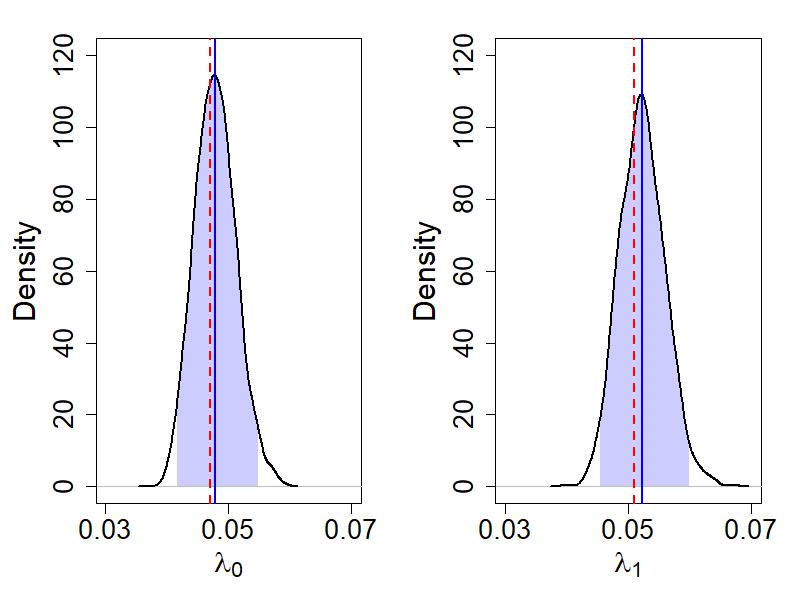}
    \caption{Simulated time grid and the time homogeneous transition intensities. Posterior density plots for $\lambda_{0}$ and $\lambda_1$ under the simulated time grid. The dashed red line shows the true value, the solid blue line indicates the estimated value,and the 95\% credible intervals are shaded.}
    \label{fig:homo_sim_case}
\end{figure}

\begin{figure}
    \centering
    \includegraphics[width=0.83\linewidth, height=7cm]{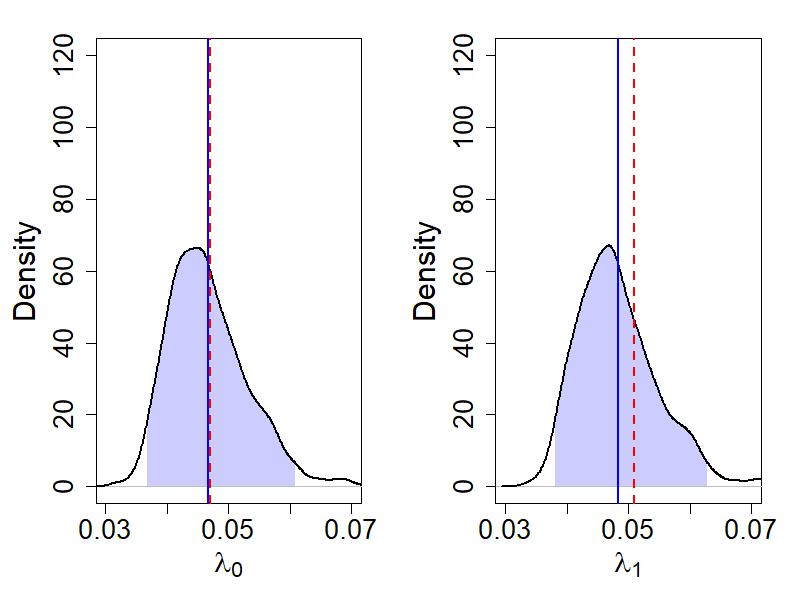}
    \caption{Real-life time grid and the time homogeneous transition intensities. Posterior Density plots  $\lambda_{0}$ and $\lambda_{1}$ under the real-life time grid. The dashed red line shows the true value, the solid blue line indicates the estimated value,and the 95\% credible intervals are shaded.}
    \label{fig:homo_real_case}
\end{figure}

\end{document}